\newcommand{\EBeta}[2]{\mathrm{B}(#1,#2)}
\newcommand{\E}[1]{\mathbf{E}\left[#1\right]}
\newtheorem{theorem}{Theorem}
\newtheorem{Definition}[theorem]{Definition}
\newtheorem{Lemma}[theorem]{Lemma}
\newtheorem{Claim}[theorem]{Claim}
\begin{document}

\begin{frontmatter}

\title{On the Displacement for Covering a Unit Interval with Randomly Placed Sensors}

\author[pwr]{Rafa\l{} Kapelko\corref{cor1}\fnref{pwrfootnote}}
\ead{rafal.kapelko@pwr.edu.pl}
\author[scs]{Evangelos Kranakis\fnref{scsfootnote}}
\ead{kranakis@scs.carleton.ca }
\fntext[scsfootnote]{Research supported by grant nr S500129/K1102.}
\fntext[scsfootnote]{Research supported in part by NSERC Discovery grant.}
\cortext[cor1]{Corresponding author at: Department of Computer Science,
Faculty of Fundamental Problems of Technology, Wroc{\l}aw University of Technology, 
 Wybrze\.{z}e Wyspia\'{n}skiego 27, 50-370 Wroc\l{}aw, Poland. Tel.: +48 71 320 33 62; fax: +48 71 320 07 51.}
\address[pwr]{ Department of Computer Science, Faculty of Fundamental Problems of Technology, Wroc{\l}aw University of Technology, Poland}
\address[scs]{School of Computer Science, Carleton University, Ottawa, ON, Canada}
\begin{abstract}

Consider $n$ mobile sensors placed independently at random with the uniform distribution on a barrier represented as the unit line segment $[0,1]$. 
The sensors have identical sensing radius, say $r$. When a sensor is displaced on the line a distance equal to $d$ it consumes energy (in movement) which is proportional to some (fixed) 
power $a > 0$ of the distance $d$ traveled. The energy consumption of a system of $n$ sensors thus displaced is defined as the sum of the energy consumptions for the displacement 
of the individual sensors. 

We focus on the problem of energy efficient displacement of the sensors so that in their final placement the sensor system ensures coverage of the barrier and the energy consumed 
for the displacement of the sensors to these final positions is minimized in expectation. In particular, we analyze the problem of displacing the sensors from their initial positions 
so as to attain coverage of the unit interval and derive trade-offs for this displacement as a function of the sensor range. We obtain several tight bounds in this setting 
thus generalizing several of the results of ~\cite{spa_2013} to any power $a >0$.

\end{abstract} 

\begin{keyword}
  barrier, displacement, distance, random, sensors,
\end{keyword}

\end{frontmatter}

\section{Introduction}
One of the most important problems in sensor networks is minimizing battery consumption when accomplishing various tasks such as monitoring an environment, tracking events along a barrier and communicating. In this study, the environment being considered consists of a line segment barrier (which for simplicity is set to the unit interval $[0, 1]$), while the accompanying monitoring problem investigated is ensuring coverage of the barrier in the sense that every point in the line segment is within the range of a sensor. 

We consider the case where the sensors are equipped with omnidirectional sensing antennas of identical range $r>0$; thus a sensor placed at location $x$ in the unit interval 
can sense any point at distance at most $r$ either to the left or right of $x$. 
The initial placement of the sensors does not guarantee barrier coverage since the sensors have been placed initially independently at random with the uniform distribution on a barrier. 
To attain coverage of the line segment it is required to displace the sensors from their original locations to new positions on the line while at the same time taking 
into account their sensing range $r$. Further, for some fixed constant $a > 0$ if a sensor is displaced a distance $d$ the energy consumed by this sensor is considered 
to be proportional to $d^a$. 
More generally, for a set of $n$ sensors, if the $i$th sensor is displaced a distance $d_i$, for $i=1,2,\ldots ,n$, 
then the energy consumed by the whole system of $n$ sensors is $\sum_{i=1}^n d_i^a$.  

In this paper we study the minimum total (or sum) energy consumption (in expectation) in the movement of the sensors so as to attain coverage of the unit segment when 
the energy consumed per sensor is proportional to some (fixed) power of the distance traveled. The present study generalizes some known results (see~\cite{spa_2013}) on the sensor displacement for $a=1$ to arbitrary $a >0$. Motivation for the extended model being proposed is that the energy consumption induced by individual sensor displacement may not be linear in this displacement, but rather be dependent on some power of the distance traversed. Further, the parameter $a$ in the exponent may well represent various conditions of the surface of the barrier, e.g., friction, lubrication, etc, which may affect the overall energy consumption of the sensor system. 

\subsection{Related work}

There is extensive literature about area and barrier (also known as perimeter) coverage by a set of sensors (e.g., see \cite{abbasi2009movement,barriercoverageNodeDegree,SSL07,kumar2005,saipulla2009,swat2012}). The coverage problem for planar domains with pre-existing anchor (or destination) points was introduced in \cite{tcs2009}. The deterministic version of the sensor displacement problem on a linear domain (or interval) was introduced in \cite{adhocnow2009}. Several optimization variants of the displacement problem were considered.The complexity of finding an algorithm that optimizes the displacement depends 1) on the types of the sensors, 2) the type of the domain, and 3) whether one is minimizing the sum or maximum of the sensor movements. For the unit interval the problem of minimizing the sum is NP-complete if the sensors may have different ranges but is in polynomial time when
all the sensor ranges are identical \cite{adhocnow2010}. The problem of 
minimizing the maximum is NP-complete if the region consists of two intervals \cite{adhocnow2009}
but is polynomial time for a single interval even when the sensors may have different
ranges \cite{swat2012}.
Related work on deterministic
algorithms for minimizing the total and maximum movement
of sensors for barrier coverage of a planar region may
be found in \cite{tcs2009}. 

More importantly, our work is closely related to the work of \cite{spa_2013} where the authors consider the expected minimum total displacement for establishing full coverage of a unit interval
for $n$ sensors placed uniformly at random. Our analysis and problem statement generalizes some of the work of \cite{spa_2013} from $a=1$ to all exponents $a > 0$. A comprehensive study of sensor displacement to arbitrary probability distributions using techniques from queueing theory can be found in the forthcoming \cite{kranakis2014scheduling}.

\subsection{Outline and results of the paper}

Our work generalizes some of the work
of \cite{spa_2013} to the more general setting
when the cost of movement is proportional to a fixed power
of the distance displacement.

The overall organization of the paper is as follows. 
In Section~\ref{sec:Introduction}
we provide several basic combinatorial facts
that will be used in the sequel. In
Section~\ref{tight:sec} we prove combinatorially how
to obtain tight bounds when the range of the sensors 
is $r = \frac{1}{2n}$. 
We show that the expected sum of displacement to the power $a$ is 
$$\frac{\left(\frac{a}{2}\right)!}{2^{\frac{a}{2}}(1+a)}\frac{1}{n^{\frac{a}{2}-1}}+
O\left(\frac{1}{n^{\frac{a}{2}}}\right),$$
when $a$ is an even positive number ,
and in $$\Theta\left(\frac{1}{n^{\frac{a}{2}-1}}\right), $$ when $a$ is an odd natural number.
In
Section~\ref{Minimum displacement:sec}
we prove the occurrence of threshold
whereby the expected minimum sum of
displacements to the power $a$ ($a$ is positive natural number) remains in $\Theta\left(\frac{1}{n^{\frac{a}{2}-1}}\right)$
provided that $r=\frac{1}{2n}+\frac{f(n)}{2},$ where $f(n)>0$ and $f(n)=o(n^{-3/2}).$
In Section~\ref{upper bounds:sec} we
study the more general version of the sensors movement to the power $a,$ where $a>0$ and
$r > \frac{1}{2n}$. If $r\ge\frac{6}{2n}$ we first present the Algorithm \ref{alg_power} that uses expected 
$$O\left(\frac{1}{n^{\frac{a}{2}-1}}\left(\frac{\ln n}{n}\right)^{\frac{a}{2}}\right)$$
total movement to power $a,$ 
where $a>0.$
Finally,
Section~\ref{conclusion}
provides the conclusions. 
\section{Basic facts}
\label{sec:Introduction}
In this section we recall some known facts about special functions and special numbers which will be useful in the analysis
in the next sections. The Euler Beta function (see \cite{NIST})
\begin{equation}
\label{eq:beta}
\EBeta{c}{d} = \int_0^1 x^{c-1}(1-x)^{d-1} dx
\end{equation}
is defined for all complex numbers $c,d$ 
such as $\Re(c)>0$ and $\Re(d)>0$. 
Moreover, for positive integer numbers $c,d$ we have
\begin{equation}
\label{Emult}
\EBeta{c}{d}^{-1}=\binom{c+d-1}{c}c
\end{equation}
Let us define a function $g_{c:d}(x)=x^{c-1}(1-x)^{d-1}$ on the interval $[0,1].$
We say that a random variable $X_{c,d}$ concentrated on the interval
$[0,1]$ has the $\EBeta{c}{d}$ distribution with parameters
$c, d$ if it has the probability density function
$f(x)=(\EBeta{c}{d})^{-1}x^{c-1}(1-x)^{d-1}.$ Hence,
\begin{equation}
\label{probal_eq}
\Pr[X_{c,d} < t] = \frac{1}{\EBeta{c}{d}}\int_0^t  g_{c:d}(x)dx
\end{equation}
We will  use the following notations
for the rising and falling factorial respectively \cite{concrete_1994}
$$n^{\overline{k}} = \begin{cases} 1 &\mbox{for } k=0 \\
n(n+1)\dots(n+k-1) & \mbox{for } k\ge 1, \end{cases}$$
$$n^{\underline{k}} = \begin{cases} 1 &\mbox{for } k=0 \\
n(n-1)\dots(n-(k-1)) & \mbox{for } k\ge 1. \end{cases}$$
Let ${ n\brack k},$  ${n\brace k}$ be the Stirling numbers of the first and second kind respectively, which are
defined for all integer numbers such that $0\le k \le n.$ 
The following two equations for Stirling numbers of the first and second kind are well known  (see \cite[identity 6.10]{concrete_1994} and \cite[identity 6.13]{concrete_1994}):
\begin{equation}
\label{eq:stirling}
x^{m}=\sum_{l}{m\brace l}(x)^{\underline{l}}
\end{equation}
\begin{equation}
\label{eq:stirling2}
(x)^{\underline{m}}=\sum_{l}{ m\brack l}(-1)^{m-l}x^{l}
\end{equation}
Assume that $b$ is a constant independent of $m.$ Then the following Stirling numbers
\begin{equation}
\label{eq:stirling_since} 
{ m\brack m-b},\,\,\,{ m+b\brack m},\,\,\,{ m\brace m-b},\,\,\,{ m+b\brace m}\,\,\,
\end{equation}
are polynomials in the variable $m$ and of degree $2b$ (see \cite{concrete_1994}).\\
Let $\left\langle\left\langle n\atop k\right\rangle\right\rangle$   be the Eulerian numbers, which are
defined for all integer numbers such that $0\le k \le n.$ 
The following three identities for Euler numbers are well known (see identities $(6.42),$ $(6.43)$ and $(6.44)$ in \cite{concrete_1994}):
\begin{equation}
\label{eq:euler1}
\sum_{l}\left\langle\left\langle m\atop l\right\rangle\right\rangle=\frac{(2m)!}{(m)!}\frac{1}{2^m}
\end{equation}
\begin{equation}
\label{eq:euler2}
{m\brace m-b}=\sum_{l}\left\langle\left\langle b\atop l\right\rangle\right\rangle\binom{m+b-1-l}{2b}
\end{equation}
\begin{equation}
\label{eq:euler3}
{m\brack m-b}=\sum_{l}\left\langle\left\langle b\atop l\right\rangle\right\rangle\binom{m+l}{2b}
\end{equation}
Let $d,f$ be non-negative integers. Notice that (see \cite[identity 5.41]{concrete_1994})
\begin{equation}
\label{eq:altalt}
\sum_{l=0}^{d}\binom{d}{l}\frac{(-1)^l}{d+1+l}=\frac{(d)!(d)!}{(2d+1)!}
\end{equation}
Observe that
$$(i-1)^{\underline{d}}\cdot i^{\overline{f}}=
\frac{(i-1)^{\underline{d}}\cdot i^{\overline{f+1}}- (i-2)^{\underline{d}}\cdot (i-1)^{\overline{f+1}}}{f+d+1}.$$
Applying this formula for $i=1$ to $n $ we easily derive
\begin{equation}
\label{eq:identity}
\sum_{i=1}^n(i-1)^{\underline{d}}\cdot i^{\overline{f}} =\frac{1}{f+d+1}(n-1)^{\underline{d}}\cdot n^{\overline{f+1}}
\end{equation}
We will also use Euler's Finite Difference Theorem (see \cite[identity 10.1]{Gould}).
Assume that $a$ is a natural number. Let $f(j)=j^m$ and $m\in \mathbb{N}.$  Then
\begin{equation}
\label{eq:triangle}
\sum_{j=0}^{a}\binom{a}{j}(-1)^jf(j)=\begin{cases} 0 &\mbox{if } m <a \\
(-1)^{a}a! & \mbox{if } m= a \end{cases}
\end{equation}
\section{Tight bounds for total displacement to the power $a$ when $r=\frac{1}{2n}$}
\label{tight:sec}
In this section we extend Theorem 1 from \cite{spa_2013} for the displacement
to the power $a,$ where $a$ is a positive natural number.  Assume that $n$ sensors with range $\frac{1}{2n}$ are thrown uniformly and independently at random in the unit interval
and move from their current location to the anchor location $t_i=\frac{i}{n}-\frac{1}{2n},$ for $i=1,\dots,n.$
Notice that the only way to attain the coverage is for the sensors to occupy the positions $t_i,$ for $i=1,\dots,n.$
We prove that the expected sum of displacement to the power $a$ is 
$\frac{\left(\frac{a}{2}\right)!}{2^{\frac{a}{2}}(1+a)}\frac{1}{n^{\frac{a}{2}-1}}+
O\left(\frac{1}{n^{\frac{a}{2}}}\right),$
when $a$ is an even positive number,
and in $\Theta\left(\frac{1}{n^{\frac{a}{2}-1}}\right), $ when $a$ is an odd natural number.
We begin with the following lemma which will be helpful in the proof of Theorem \ref{thm:mainexactodd}.
It is worth pointing out that the proof of Lemma \ref{lem:sum} is technically complicated.
Our proof of Lemma \ref{lem:sum} proceeds along
the following steps.
Firstly, we  reduce the inner sum  
$\sum_{i=1}^{n} \frac{ \left(i-\frac{1}{2}\right)^{a-j}\cdot i^{\overline{j}}}{(n+1)^{\overline{j}}}$
to the sum 
$\sum_{i=1}^{n}\frac{(i-1)^{\underline{l_2}}i^{\overline{j}}}{(n+1)^{\overline{j}}}$
which is known (see equation (\ref{eq:identity})).
Then we have the following sum
$$
\sum_{j=0}^{a}\frac{1}{n^a}\binom{a}{j}(-1)^j\sum_{k=0}^{a+1}C_{a+1-k}\,\, n^{a+1-k},$$
where $C_{a+1-k}$ is the polynomial of variable $a-j$ of degree less than or equal $2k.$
Finally, the asymptotic follows from Euler's Finite Difference Theorem (see equation ( \ref{eq:triangle})).
\begin{Lemma}
\label{lem:sum}
Assume that $a$ is an even positive number. Then
$$
\sum_{j=0}^{a}\sum_{i=1}^{n}\frac{1}{n^a}\binom{a}{j}(-1)^jn^j  \frac{ \left(i-\frac{1}{2}\right)^{a-j}\cdot i^{\overline{j}}}{(n+1)^{\overline{j}}}=
\frac{\left(\frac{a}{2}\right)!}{2^{\frac{a}{2}}(1+a)}\frac{1}{n^{\frac{a}{2}-1}}+
O\left(\frac{1}{n^{\frac{a}{2}}}\right)
$$
\end{Lemma}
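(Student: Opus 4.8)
The plan is to follow exactly the three‑step outline sketched just before the statement. First I would rewrite the inner factor $\left(i-\tfrac12\right)^{a-j}$ using the binomial theorem, $\left(i-\tfrac12\right)^{a-j}=\sum_{s}\binom{a-j}{s}i^{s}(-\tfrac12)^{a-j-s}$, and then convert the ordinary power $i^{s}$ into falling factorials via the Stirling identity \eqref{eq:stirling}, $i^{s}=\sum_{l_1}{s\brace l_1}(i)^{\underline{l_1}}$. Since $(i)^{\underline{l_1}}=(i-1)^{\underline{l_1-1}}\cdot i$ does not quite have the shape $(i-1)^{\underline{l_2}}i^{\overline{j}}$, I would absorb the extra factor $i$ into $i^{\overline{j}}$, noting $i\cdot i^{\overline{j}}$ is a combination of $i^{\overline{j+1}}$ and $i^{\overline{j}}$, or more cleanly shift indices so that the resulting summand over $i$ is literally $\dfrac{(i-1)^{\underline{l_2}}i^{\overline{j}}}{(n+1)^{\overline{j}}}$. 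Applying \eqref{eq:identity} with $d=l_2$, $f=j$ then collapses the $i$‑sum in closed form to $\dfrac{1}{l_2+j+1}\cdot\dfrac{(n-1)^{\underline{l_2}}n^{\overline{j+1}}}{(n+1)^{\overline{j}}}$, and since $\dfrac{n^{\overline{j+1}}}{(n+1)^{\overline{j}}}=n$, this is $\dfrac{n}{l_2+j+1}(n-1)^{\underline{l_2}}$.

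Second, I would expand $(n-1)^{\underline{l_2}}$ as a polynomial in $n$ via \eqref{eq:stirling2} and collect everything by powers of $n$. After multiplying back by the prefactor $\tfrac{1}{n^a}\binom{a}{j}(-1)^jn^j$, the whole double sum takes the advertised form $\sum_{j=0}^{a}\tfrac{1}{n^a}\binom{a}{j}(-1)^j\sum_{k=0}^{a+1}C_{a+1-k}\,n^{a+1-k}$, where each coefficient $C_{a+1-k}$ is built from binomials, Stirling numbers and the factor $\tfrac{1}{l_2+j+1}$, and — this is the point I would verify carefully — is a polynomial in $a-j$ of degree at most $2k$. The degree bound comes from \eqref{eq:stirling_since}: the Stirling numbers ${s\brace s-b}$ etc. that survive at a given power of $n$ are polynomials of degree $2b$ in their argument, and the loss of $b$ units in the exponent of $n$ matches a gain of $2b$ in the degree.

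Third, the asymptotics. Writing the sum as $\sum_{k=0}^{a+1}n^{a+1-k}\cdot\tfrac{1}{n^a}\sum_{j=0}^{a}\binom{a}{j}(-1)^jC_{a+1-k}(a-j)$, I would invoke Euler's Finite Difference Theorem \eqref{eq:triangle}: since $C_{a+1-k}$ has degree $\le 2k$ in $a-j$ (equivalently, as a polynomial in $j$), the alternating sum $\sum_{j}\binom{a}{j}(-1)^jC_{a+1-k}(a-j)$ vanishes whenever $2k<a$, i.e. for $k<a/2$. Hence only $k\ge a/2$ contributes, so the top surviving power of $n$ is $n^{a+1-k-a}=n^{1-k}$ with $k=a/2$, giving the leading term of order $n^{-(a/2-1)}$ and everything else $O(n^{-a/2})$. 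The leading coefficient is exactly the $k=a/2$ term: here $C_{a+1-k}$ has degree exactly $a=2k$ in $j$, so by \eqref{eq:triangle} the alternating sum picks out $(-1)^a a!$ times its leading coefficient; tracing that leading coefficient through the Stirling expansions (it is governed by the top Eulerian/Stirling contribution, which is why identities \eqref{eq:euler1}–\eqref{eq:euler3} are in the toolbox) should yield precisely $\dfrac{\left(\frac{a}{2}\right)!}{2^{\frac{a}{2}}(1+a)}$, where the $\tfrac1{1+a}$ is the residue of the $\tfrac1{l_2+j+1}$ factor at the extreme term and the $\dfrac{(a/2)!}{2^{a/2}}$ packages the even‑moment constant $\dfrac{(2m)!}{m!\,2^m}$ with $m=a/2$ after the alternating sum.

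The main obstacle is the bookkeeping in the second step: one must show that after re‑expressing $i^{s}$ in falling factorials, shifting to the $(i-1)^{\underline{l_2}}i^{\overline{j}}$ shape, applying \eqref{eq:identity}, and re‑expanding $(n-1)^{\underline{l_2}}$, the coefficient $C_{a+1-k}$ genuinely has degree $\le 2k$ in $a-j$ uniformly — and then to extract its leading coefficient at $k=a/2$ cleanly enough to match the stated constant. The vanishing for $k<a/2$ is robust (it only needs the degree bound plus \eqref{eq:triangle}), so the real work is the exact evaluation of the single surviving extremal term.
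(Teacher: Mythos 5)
Your plan reproduces the skeleton of the paper's argument (which is in fact sketched in the paragraph preceding the lemma), and the structural parts are sound: the degree bound $\deg_{a-j} C_{a+1-k}\le 2k$ via \eqref{eq:stirling_since}, the vanishing of the alternating sum for $2k<a$ via \eqref{eq:triangle}, and the identification of $k=a/2$ as the unique surviving extremal term all match what the paper does. Two points, however, are left unexecuted, and the second one is the actual content of the lemma. First, by expanding $\left(i-\tfrac12\right)^{a-j}$ in powers of $i$ you create the mismatch you then have to patch (``absorb the extra factor $i$'', ``shift indices''); the paper avoids this entirely by writing $i-\tfrac12=(i-1)+\tfrac12$, so that \eqref{eq:stirling} applied to $(i-1)^{a-j-l_1}$ lands directly on $(i-1)^{\underline{l_2}}$ and \eqref{eq:identity} applies verbatim. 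Your detour is repairable (e.g.\ via $(i)^{\underline{m}}=(i-1)^{\underline{m}}+m\,(i-1)^{\underline{m-1}}$), but as written it is not yet a proof step.

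Second, and more seriously, the evaluation of the leading constant is asserted (``should yield precisely\dots'') rather than derived. The lemma is not an $O$-estimate: its content is the exact coefficient $\frac{(a/2)!}{2^{a/2}(1+a)}$, and extracting it is the hard half of the proof. The paper does this by computing the coefficient of $j^{2k}$ in $C_{a+1-k}$: identities \eqref{eq:euler2} and \eqref{eq:euler3} express the relevant Stirling numbers as binomials weighted by Eulerian numbers, whose top-degree contributions are $\sum_{l}\left\langle\left\langle b\atop l\right\rangle\right\rangle/(2b)!$; summing these with \eqref{eq:euler1} gives $d_{a+1,k}=\frac{1}{2^k k!}\sum_{l_4}\binom{k}{l_4}\frac{(-1)^{l_4}}{a+1-k+l_4}$, and \eqref{eq:altalt} with $d=a/2$ then yields $d_{a+1,a/2}=\frac{(a/2)!}{2^{a/2}(1+a)!}$, which multiplied by the $(-1)^a a!$ supplied by \eqref{eq:triangle} gives the stated constant. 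Your heuristic accounting (the $\frac{1}{1+a}$ coming from the $\frac{1}{l_2+j+1}$ factor, the moment constant $(2m)!/(m!\,2^m)$) points at the right ingredients, but until this chain is actually carried out the proposal remains a plan rather than a proof of the stated equality.
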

\begin{proof}
As a first step, we evaluate the inner sum.
Let $j\in\{0,\dots, a\}.$
Applying equation (\ref{eq:stirling}) for $x=i-1,\,\,$ $m=a-j-l_1$, 
and equation (\ref{eq:identity}) for $d=l_2,\,\,$ $f=j$,
as well equation (\ref{eq:stirling2}) for $x=n,\,\,$ $m=l_2+1$  we deduce that
\begin{align*}
&n^j\sum_{i=1}^{n}   \frac{ \left(i-\frac{1}{2}\right)^{a-j}\cdot i^{\overline{j}}}{(n+1)^{\overline{j}}}\\
&\,\,=n^j\sum_{l_1}\binom{a-j}{l_1}\left(\sum_{i=1}^{n}\frac{(i-1)^{a-j-l_1}\left(\frac{1}{2}\right)^{l_1}i^{\overline{j}}}{(n+1)^{\overline{j}}}\right)\\
&\,\,=n^j\sum_{l_1}\binom{a-j}{a-j-l_1}\left(\frac{1}{2}\right)^{l_1}\left(\sum_{l_2}{a-j-l_1\brace l_2}  \left(\sum_{i=1}^{n}\frac{(i-1)^{\underline{l_2}}i^{\overline{j}}}{(n+1)^{\overline{j}}}\right)\right)\\
&\,\,=n^j\sum_{l_1}\binom{a-j}{a-j-l_1}\left(\frac{1}{2}\right)^{l_1}\left(\sum_{l_2}{a-j-l_1\brace l_2} \frac{1}{l_2+j+1}n^{\underline{l_2+1}}\right)\\
&\,\,=\sum_{l_1}\binom{a-j}{a-j-l_1}\left(\frac{1}{2}\right)^{l_1}\\
&\,\,\times\left(\sum_{l_2}\sum_{l_3}{a-j-l_1\brace l_2}{l_2+1\brack l_3}(-1)^{l_2+1-l_3} \frac{1}{l_2+j+1}n^{l_3+j}\right).
\end{align*}
Hence $$\sum_{i=1}^{n} n^j  \frac{ \left(i-\frac{1}{2}\right)^{a-j}\cdot i^{\overline{j}}}{(n+1)^{\overline{j}}}=\sum_{k=0}^{a+1}C_{a+1-k}\,\, n^{a+1-k}.$$

Now we prove that $C_{a+1-k}$ is the polynomial of variable $a-j$ of degree less than or equal $2k.$
Observe that
\begin{align*}
&C_{a+1-k}\\
&\,\,=\sum_{l_1}\binom{a-j}{a-j-l_1}\left(\frac{1}{2}\right)^{l_1}\\
&\,\,\times\left(\sum_{l_2}{a-j-l_1\brace l_2}{l_2+1\brack a+1-k-j}(-1)^{l_2-a+k+j} \frac{1}{l_2+j+1}\right)\\
&\,\,=\sum_{l_1}\binom{a-j}{a-j-l_1}\left(\frac{1}{2}\right)^{l_1}\\
&\,\,\times\left(\sum_{l_4}{a-j-l_1\brace a-j+l_4-k}{a-j+l_4-k+1\brack a+1-k-j}\frac{(-1)^{l_4}}{a+1-k+l_4}\right).
\end{align*}
Since $\binom{a-j}{a-j-l_1}$ is the polynomial of variable $a-j$ of degree $l_1,$
${a-j-l_1\brace a-j+l_4-k}$ is the polynomial of variable $a-j-l_1$ of degree $2(k-l_1-l_4)$
and ${a-j+l_4-k+1\brack a+1-k-j}$ is the polynomial of variable $a+1-j-k$ of degree $2l_4$ (see (\ref{eq:stirling_since})),
we obtain that, $C_{a+1-k}$ is the polynomial of variable $a-j$ of degree less than or equal $2k.$ 

Now we give the coefficient of the term $j^{2k}$ in the polynomials $C_{a+1-k}.$
Applying identity (\ref{eq:euler2}) for $m=a-j,$ $b=k-l_4$ and identity (\ref{eq:euler3}) for
$m=a-j+l_4-k+1,$ $b=l_4$ we observe that the coefficient of the term $j^{2k}$ in the polynomials $C_{a+1-k}$ equals
\begin{align*}
d_{a+1,k}
&=\sum_{l_4}\left(\sum_{j_1}\left\langle\left\langle k-l_4\atop j_1\right\rangle\right\rangle \frac{1}{(2(k-l_4))!}\right)\\
&~~~~~~~\times\left(\sum_{j_2}  \left\langle\left\langle l_4\atop j_2\right\rangle\right\rangle \frac{1}{(2l_4)!}\right)\frac{(-1)^{l_4}}{a+1-k+l_4}
\end{align*}
Therefore, from equation (\ref{eq:euler1}) we have
$$
d_{a+1,k}=\frac{1}{2^k k!}\sum_{l_4=0}^{k}\binom{k}{l_4}\frac{(-1)^{l_4}}{a+1-k+l_4}.
$$
Using identity (\ref{eq:altalt}) we deduce that
$$
d_{a+1,\frac{a}{2}}=\frac{\left(\frac{a}{2}\right)!}{(1+a)!}\frac{1}{2^{\frac{a}{2}}}.
$$

We now apply equation ( \ref{eq:triangle})) in order to  get
$$
\sum_{j=0}^{a}\binom{a}{j}(-1)^jC_{a+1-k}=\begin{cases} 0 &\mbox{if } 2k <a \\
\frac{\left(\frac{a}{2}\right)!(-1)^a}{2^{\frac{a}{2}}(1+a)} & \mbox{if } 2k=a \end{cases}.
$$

Putting everything together, we finally obtain
\begin{align*}
\sum_{j=0}^{a}&\sum_{i=1}^{n}\frac{1}{n^a}\binom{a}{j}(-1)^jn^j  \frac{ \left(i-\frac{1}{2}\right)^{a-j}\cdot i^{\overline{j}}}{(n+1)^{\overline{j}}}\\
&=\sum_{k=0}^{a+1}\frac{n^{a+1-k}}{n^a}\sum_{j=0}^{a}\binom{a}{j}(-1)^j\, C_{a+1-k}\\
&=\frac{\left(\frac{a}{2}\right)!}{2^{\frac{a}{2}}(1+a)}\frac{1}{n^{\frac{a}{2}-1}}+
O\left(\frac{1}{n^{\frac{a}{2}}}\right).
\end{align*}
This completes the proof of Lemma \ref{lem:sum}.
\end{proof}

\subsection{Tight bound for total displacement to the power $a$ when $r=\frac{1}{2n}$ and $a$ is an even positive number}

\begin{theorem} 
\label{thm:mainexactodd} 
Let $a$ be an even positive number. Assume that $n$ mobile sensors are thrown uniformly and independently at random in the unit interval. The expected sum
of displacements to the power $a$ of all sensors to move from their current location to the anchor location $t_i=\frac{i}{n}-\frac{1}{2n},$
for $i=1,\dots,n,$ respectively is $$\frac{\left(\frac{a}{2}\right)!}{2^{\frac{a}{2}}(1+a)}\frac{1}{n^{\frac{a}{2}-1}}+
O\left(\frac{1}{n^{\frac{a}{2}}}\right).$$ 
\end{theorem}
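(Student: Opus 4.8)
The plan is to reduce the theorem to a direct computation of the expectation and then invoke Lemma~\ref{lem:sum}. First I would set up the probabilistic model: if $U_1,\dots,U_n$ are i.i.d.\ uniform on $[0,1]$ with order statistics $U_{(1)}\le\cdots\le U_{(n)}$, then the only placement achieving coverage with range $\frac{1}{2n}$ forces the sensor with the $i$th smallest initial position to the anchor $t_i=\frac{i}{n}-\frac{1}{2n}$, so the quantity of interest is $\E{\sum_{i=1}^n \bigl|U_{(i)}-t_i\bigr|^a}=\sum_{i=1}^n \E{\bigl|U_{(i)}-t_i\bigr|^a}$ by linearity. The classical fact is that $U_{(i)}$ has the $\BetaD{i}{n-i+1}$ distribution, whose density is $(\EBeta{i}{n-i+1})^{-1}x^{i-1}(1-x)^{n-i}$, and $\EBeta{i}{n-i+1}^{-1}=\binom{n}{i}i$ by equation~(\ref{Emult}).

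Next I would expand $\bigl|x-t_i\bigr|^a$. Since $a$ is an even positive integer, the absolute value disappears: $\bigl|x-t_i\bigr|^a=(x-t_i)^a=\sum_{j=0}^{a}\binom{a}{j}(-1)^j t_i^{\,a-j}x^j$. Plugging $t_i=\frac{i-1/2}{n}$ gives $t_i^{\,a-j}=n^{-(a-j)}(i-\tfrac12)^{a-j}$. Then
\begin{align*}
\E{\bigl|U_{(i)}-t_i\bigr|^a}
&=\sum_{j=0}^{a}\binom{a}{j}(-1)^j \frac{(i-\tfrac12)^{a-j}}{n^{a-j}}\,\E{U_{(i)}^{\,j}}.
\end{align*}
The $j$th moment of a $\BetaD{i}{n-i+1}$ variable is $\E{U_{(i)}^{\,j}}=\dfrac{\EBeta{i+j}{n-i+1}}{\EBeta{i}{n-i+1}}$, which simplifies to $\dfrac{i^{\overline{j}}}{(n+1)^{\overline{j}}}$ (either via the Gamma-function formula for Beta, or directly from $\EBeta{c}{d}^{-1}=\binom{c+d-1}{c}c$). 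Substituting and summing over $i$ from $1$ to $n$ produces exactly
$$\sum_{j=0}^{a}\sum_{i=1}^{n}\frac{1}{n^a}\binom{a}{j}(-1)^j n^j\,\frac{(i-\tfrac12)^{a-j}\cdot i^{\overline{j}}}{(n+1)^{\overline{j}}},$$
at which point Lemma~\ref{lem:sum} gives the claimed asymptotic $\frac{(a/2)!}{2^{a/2}(1+a)}\frac{1}{n^{a/2-1}}+O\!\left(\frac{1}{n^{a/2}}\right)$, completing the proof.

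The conceptual steps here are all routine: the coverage constraint forcing the matching of order statistics to anchors, linearity of expectation, binomial expansion valid because $a$ is even, and the standard Beta moment identity. The genuine work is entirely contained in Lemma~\ref{lem:sum}, so in the write-up the main thing to be careful about is making the index bookkeeping match the lemma's statement verbatim --- in particular tracking the factor $n^j$ that appears after writing $t_i^{\,a-j}=(i-\tfrac12)^{a-j}/n^{a-j}$ and regrouping so that the $1/n^a$ prefactor is exactly as in the lemma. I would also remark explicitly why the only coverage-attaining configuration is $U_{(i)}\mapsto t_i$: each anchor interval $[t_i-\frac{1}{2n},t_i+\frac{1}{2n}]$ must be used and these tile $[0,1]$ with disjoint interiors, so the assignment is forced and monotone, hence matches the $i$th order statistic to $t_i$. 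No separate obstacle remains once Lemma~\ref{lem:sum} is in hand.
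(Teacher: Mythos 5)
Your proposal is correct and follows essentially the same route as the paper: binomially expand $(x-t_i)^a$ (valid since $a$ is even), use the Beta-distribution moment identity $\E{X_i^j}=i^{\overline{j}}/(n+1)^{\overline{j}}$ derived from equation~(\ref{Emult}), and hand the resulting double sum to Lemma~\ref{lem:sum}. The paper organizes the same computation via the quantities $D^{(a)}_{i,j}$, but the content is identical.
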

\begin{proof}
Let $X_i$ be the $i$th order statistic, i.e., the position of the $i$th sensor in interval $[0,1].$
We know that the random variable $X_i$ has the $\EBeta{i}{n-i+1}$ distribution. For example see \cite{nagaraja_1992}.
Assume that $a$ is an even positive number.
Let $D^{(a)}_i$ be the expected distance to the power $a$ between $X_i$ and the $i^{th}$ sensor anchor location, $t_i=\frac{i}{n}-\frac{1}{2n},$ on the
unit interval, hence given by:
\begin{align*}
D^{(a)}_i&=i\binom{n}{i}\int_{0}^{1}|x-t_i|^ag_{i:n-i+1}(x)dx\\
&=i\binom{n}{i}\int_{0}^{1}(t_i-x)^ag_{i:n-i+1}(x)dx.
\end{align*}
Now we define
$$D^{(a)}_{i,j}=i\binom{n}{i}{t_i}^{a-j}\binom{a}{j}(-1)^j\int_{0}^{1}x^jg_{i:n-i+1}(x)dx$$
for $j\in\{0,1,\dots ,a\}$ and $i\in\{1,2,\dots ,n\}.$ Observe that
$$D^{(a)}_i=\sum_{j=0}^{a}D^{(a)}_{i,j}.$$
From the definition of Beta function and identity (\ref{Emult}) we get
$$
        D^{(a)}_{i,j}=\frac{1}{n^a}\binom{a}{j}(-1)^j n^j  \frac{ \left(i-\frac{1}{2}\right)^{a-j}\cdot i^{\overline{j}}}{(n+1)^{\overline{j}}}.
$$
Hence applying Lemma \ref{lem:sum} we conclude that
$$\sum_{i=1}^{n} \sum_{j=0}^{a} D^{(a)}_{i,j}=\sum_{j=0}^{a}\sum_{i=1}^{n} D^{(a)}_{i,j}=\frac{\left(\frac{a}{2}\right)!}{2^{\frac{a}{2}}(1+a)}\frac{1}{n^{\frac{a}{2}-1}}+
O\left(\frac{1}{n^{\frac{a}{2}}}\right).$$
This finishes the proof of Theorem \ref{thm:mainexactodd}. 
\end{proof}
\subsection{Tight bound for total displacement to the power $a$ when $r=\frac{1}{2n}$ and $a$ is an odd natural number}
\begin{theorem} 
\label{thm:mainexact_oddtwo} 
Let $a$ be an odd natural number. 
Assume that $n$ mobile sensors are thrown uniformly and independently at random in the unit interval. The expected sum
of displacements to the power $a$ of all sensors to move from their current location to anchor location $t_i=\frac{i}{n}-\frac{1}{2n},$
for $i=1,\dots,n,$ respectively is $\Theta\left(\frac{1}{n^{\frac{a}{2}-1}}\right).$
\end{theorem}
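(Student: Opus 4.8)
The plan is to mirror the structure of the proof of Theorem~\ref{thm:mainexactodd}, but only tracking orders of magnitude rather than exact constants. As before, let $X_i$ be the $i$th order statistic, so $X_i$ has the $\EBeta{i}{n-i+1}$ distribution, and set
$$D^{(a)}_i = i\binom{n}{i}\int_0^1 |x-t_i|^a g_{i:n-i+1}(x)\,dx,$$
the expected displacement to the power $a$ of sensor~$i$ to its anchor $t_i = \frac{i}{n}-\frac{1}{2n}$. The quantity we must estimate is $\sum_{i=1}^n D^{(a)}_i$. Since $a$ is now odd the trick $|x-t_i|^a = (t_i - x)^a$ no longer holds on all of $[0,1]$ (the sign of $x - t_i$ changes), so the clean binomial-expansion computation of Lemma~\ref{lem:sum} breaks. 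Instead I would prove the matching upper and lower bounds separately.

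For the \textbf{upper bound}, observe that $|x - t_i|^a \le (t_i - x)^{a-1} \cdot |x - t_i| \le |x-t_i|^{a+1}/\,?$ — more cleanly, for $a$ odd, $|x-t_i|^a \le (x-t_i)^{a-1}\cdot|x-t_i|$, and since $a-1$ is even, $|x-t_i|^a \le (x-t_i)^{a-1} + (x-t_i)^{a+1}$ is not quite right either; the cleanest route is the elementary inequality $|y|^a \le 1 + y^{a+1}$ for... no. The robust move: bound $|x-t_i|^a \le \tfrac12\big((x-t_i)^{a-1} + (x-t_i)^{a+1}\big)$ using $|y|^a \le \tfrac12(|y|^{a-1}\cdot y^2 + |y|^{a-1}) $ — actually simplest is to use that $a-1$ and $a+1$ are even and $|y|^{a} \le \max(|y|^{a-1},|y|^{a+1}) \le |y|^{a-1}+|y|^{a+1}$. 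Thus
$$\sum_{i=1}^n D^{(a)}_i \le \sum_{i=1}^n D^{(a-1)}_i + \sum_{i=1}^n D^{(a+1)}_i,$$
and since $a-1$ and $a+1$ are even positive numbers, Theorem~\ref{thm:mainexactodd} gives both sums as $\Theta(n^{-(a-1)/2+1})$ and $\Theta(n^{-(a+1)/2+1})$ respectively; the dominant term is $\Theta(n^{-(a-1)/2+1}) = \Theta(n^{-(a/2-1)-1/2})$. Hmm — that is a factor $n^{-1/2}$ \emph{smaller} than the claimed $\Theta(n^{-(a/2-1)})$, so this crude sandwich is too lossy for the upper bound and must be sharpened: one should instead expand $(t_i-x)^a$ directly (allowing the single sign change) and note that the region where $x>t_i$ contributes the same order as where $x<t_i$; i.e. reuse the computation of Lemma~\ref{lem:sum} but only to extract the leading \emph{order} $n^{-(a/2-1)}$ from the $k=a/2$-type term, checking the argument goes through with $a$ odd when one keeps absolute values and uses $\int_0^1 x^j g_{i:n-i+1}(x)\,dx$ together with the concentration of $X_i$ near $i/n$.

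For the \textbf{lower bound}, it suffices to exhibit one sensor, or a constant fraction of them, contributing $\Omega(n^{-(a/2-1)})$ in total. The natural choice is to drop to a single term: since $D^{(a)}_i \ge 0$ and, by Jensen applied to the convex function $t\mapsto t^{a}$ on the displacement, or more directly by the fact that the $\EBeta{i}{n-i+1}$ distribution has standard deviation $\Theta(\sqrt{i(n-i)}/n^{3/2})$ about its mean, which is within $O(1/n)$ of $t_i$, we get $D^{(a)}_i = \Omega\big((\sqrt{i(n-i)}/n^{3/2})^a\big)$ for, say, $n/4 \le i \le 3n/4$. Summing over that range of $i$ gives $\Omega\big(n \cdot (\sqrt n / n^{3/2})^a\big) = \Omega(n^{1-a/2}) = \Omega(n^{-(a/2-1)})$, matching the upper bound. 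I expect the main obstacle to be the upper bound: making the single-sign-change expansion rigorous enough to certify that no cancellation in the Euler--Finite-Difference step (equation~(\ref{eq:triangle})) kills the leading order when $a$ is odd — i.e. verifying that the analogue of $d_{a+1,a/2}$ is nonzero (here one should expect a term of order $n^{-(a/2-1)}$ with $(a\pm1)/2$ playing the role of the half-integer exponent), which requires a slightly more careful bookkeeping of which $C_{a+1-k}$ survives than in Lemma~\ref{lem:sum}.
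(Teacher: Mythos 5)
Your proposal has a genuine gap in the upper bound, and you say so yourself: the argument you sketch there (``reuse the computation of Lemma~\ref{lem:sum} but only to extract the leading order,'' keeping track of the single sign change and checking that no cancellation in the Euler finite-difference step kills the leading term) is a plan, not a proof, and it is the hard direction. Your fallback sandwich $|y|^a\le |y|^{a-1}+|y|^{a+1}$ gives $\sum_i D^{(a)}_i \le \Theta\left(n^{1-\frac{a-1}{2}}\right)=\Theta\left(n^{\frac{3}{2}-\frac{a}{2}}\right)$, which is a factor $n^{1/2}$ \emph{larger} than the target $\Theta\left(n^{1-\frac{a}{2}}\right)$ (not smaller, as you wrote --- the slip is in equating $n^{-(a-1)/2+1}$ with $n^{-(a/2-1)-1/2}$); either way it is too weak. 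The missing idea, which is how the paper proceeds, is to interpolate with H\"older instead of adding: the integral H\"older (or Jensen) inequality with respect to the probability measure $i\binom{n}{i}g_{i:n-i+1}(x)\,dx$ gives $\left(D^{(a)}_i\right)^{\frac{a+1}{a}}\le D^{(a+1)}_i$, and the discrete H\"older inequality gives $\sum_{i=1}^n D^{(a)}_i\le\left(\sum_{i=1}^n D^{(a+1)}_i\right)^{\frac{a}{a+1}}n^{\frac{1}{a+1}}$; plugging in Theorem~\ref{thm:mainexactodd} for the even exponent $a+1$ yields $\left(\Theta\left(n^{1-\frac{a+1}{2}}\right)\right)^{\frac{a}{a+1}}n^{\frac{1}{a+1}}=\Theta\left(n^{1-\frac{a}{2}}\right)$ with no loss. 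No re-examination of Lemma~\ref{lem:sum} for odd $a$ is needed.

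Your lower bound is a genuinely different route from the paper's. The paper again interpolates, this time downward: $\left(D^{(a-1)}_i\right)^{\frac{a}{a-1}}\le D^{(a)}_i$ together with discrete H\"older gives $\sum_i D^{(a)}_i\ge\left(\sum_i D^{(a-1)}_i\right)^{\frac{a}{a-1}}n^{-\frac{1}{a-1}}=\Theta\left(n^{1-\frac{a}{2}}\right)$ for $a\ge 3$ (the case $a=1$ is quoted from~\cite{spa_2013}). Your variance argument --- that for $n/4\le i\le 3n/4$ the $\EBeta{i}{n-i+1}$ distribution has standard deviation $\Theta(n^{-1/2})$ and mean within $O(1/n)$ of $t_i$, hence $D^{(a)}_i=\Omega(n^{-a/2})$ --- is sound in spirit and more elementary, but as written it needs one more step: for odd $a\ge 3$ you can get it cleanly from $\E{|X_i-t_i|^a}\ge\left(\E{(X_i-t_i)^2}\right)^{a/2}$ (Jensen for $t\mapsto t^{a/2}$ applied to $(X_i-t_i)^2$), while your appeal to ``Jensen applied to the convex function $t\mapsto t^a$ on the displacement'' only bounds $\E{|X_i-t_i|^a}$ below by $|\E{X_i-t_i}|^a=O(n^{-a})$, which is useless; and for $a=1$ a genuine anti-concentration estimate (e.g.\ $\E{|Y|}\ge(\E{Y^2})^2/\E{|Y|^3}$) would be required. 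So: the lower bound is repairable by a different method than the paper's, but the upper bound as proposed is not a proof.
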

\begin{proof} 
Let $a$ be an odd natural number.
Firstly, observe that the result for $a=1$ follows from [\cite{spa_2013}, Theorem 1]. Therefore, we may assume that $a\ge 3.$
Let $D_i^{(a)}$ be the expected distance to the power $a$ between $X_i$ and the $i^{th}$ target anchor location, $t_i=\frac{i}{n}-\frac{1}{2n},$ on the
unit interval, hence given by:
$$
        D_i^{(a)}=i\binom{n}{i}\int_{0}^{1}|t_i-x|^ag_{i:n-i+1}(x)dx.
$$

First we prove the upper bound.
We use discrete H\"older inequality with parameters $\frac{a+1}{a},$  $a+1$
and get
\begin{align}
\sum_{i=1}^{n}D^{(a)}_i\notag
&\le
\left(\sum_{i=1}^{n}\left(D^{(a)}_i\right)^{\frac{a+1}{a}}\right)^{\frac{a}{a+1}}
\left(\sum_{i=1}^{n}1\right)^{\frac{1}{a+1}}\\
&=  \label{eq:holders}
\left(\sum_{i=1}^{n}\left(D^{(a)}_i\right)^{\frac{a+1}{a}}\right)^{\frac{a}{a+1}}
n^{\frac{1}{a+1}}.
\end{align}
Next we use H\"older inequality for integrals with parameters $\frac{a+1}{a},$  $a+1$ and get
\begin{align*}
\int_0^1&|t_i-x|^ag_{i:n-i+1}(x)i\binom{n}{i}dx\\
&\le \left(\int_0^1\left(|t_i-x|^a\right)^{\frac{a+1}{a}}g_{i:n-i+1}(x)i\binom{n}{i}dx\right)^{\frac{a}{a+1}},
\end{align*}
so 
\begin{equation}
\label{eq:holder10}
\left(D^{(a)}_i\right)^{\frac{a+1}{a}}\le D^{(a+1)}_i
\end{equation}
Putting together Theorem \ref{thm:mainexactodd} for $a:=a+1$ and equations (\ref{eq:holders}), (\ref{eq:holder10}) we deduce that
$$
\sum_{i=1}^{n}D^{(a)}_i\le\left(\Theta\left(\frac{1}{n^{\frac{a+1}{2}-1}}\right)\right)^{\frac{a}{a+1}}
n^{\frac{1}{a+1}}=\Theta\left(\frac{1}{n^{\frac{a }{2}-1}}\right).
$$

Next we prove the lower bound.
We use discrete H\"older inequality with parameters $\frac{a }{a-1},$  ${a}$
and get
\begin{align}
\sum_{i=1}^{n}D^{(a-1)}_i
&\le \notag
\left(\sum_{i=1}^{n}\left(D^{(a-1)}_i\right)^{\frac{a}{a-1}}\right)^{\frac{a-1}{a}}
\left(\sum_{i=1}^{n}1\right)^{\frac{1}{a}}\\
&=  \label{eq:holder3}
\left(\sum_{i=1}^{n}\left(D^{(a-1)}_i\right)^{\frac{a}{a-1}}\right)^{\frac{a-1}{a}}
n^{\frac{1}{a}}
\end{align}
Next we use H\"older inequality for integrals with parameters $\frac{a}{a-1},$ ${a}$ and get
\begin{align*}
\int_0^1&|t_i-x|^{a-1}g_{i:n-i+1}(x)i\binom{n}{i}dx\\
&\le
\left(\int_0^1\left(|t_i-x|^{a-1}\right)^{\frac{a}{a-1}}g_{i:n-i+1}(x)i\binom{n}{i}dx\right)^{\frac{a-1}{a}},
\end{align*}
so 
\begin{equation}
\label{eq:holdera}
\left(D^{(a-1)}_i\right)^{\frac{a}{a-1}}\le D^{(a)}_i
\end{equation}
Putting together Theorem \ref{thm:mainexactodd} for $a:=a-1$ and equations (\ref{eq:holder3}), (\ref{eq:holdera}) we obtain
\begin{align*}
\sum_{i=1}^{n}D^{(a)}_i&\ge \left(\sum_{i=1}^{n}D_{i}^{(a-1)}\right)^{\frac{a}{a-1}}
n^{\frac{-1}{a-1}}\\
&=\left(\Theta\left(\frac{1}{n^{\frac{a-1}{2}-1}}\right)\right)^{\frac{a}{a-1}}
n^{\frac{-1}{a-1}}\\
&=
\Theta\left(\frac{1}{n^{\frac{a }{2}-1}}\right).
\end{align*}
This finishes the proof of the lower bound and completes the proof of Theorem \ref{thm:mainexact_oddtwo}. 
\end{proof}

\section{A Threshold on the minimum displacement}
\label{Minimum displacement:sec}
In this section
we prove the occurrence of threshold
whereby the expected minimum sum of
displacements to power $a,$ where $a$ is positive natural number, remains in $\Theta\left(\frac{1}{n^{\frac{a}{2}-1}}\right)$
provided that $r=\frac{1}{2n}+\frac{f(n)}{2},$ where $f(n)>0$ and $f(n)=o(n^{-3/2}).$
\begin{Definition}
Given $a, r$ we denote by $E^{(a)}(r)$ the expected minimum sum of displacement to the power $a$ (where $a$ is positive natural number) of $n$ sensors
with range $r.$
\end{Definition}
\begin{theorem}
\label{cov1:thm}
Assume that $a$ is a natural number. Let $r>0$ be the range of the sensors.
If   $r=\frac{1}{2n}+\frac{f(n)}{2},$ where $f(n)>0$ and $f(n)=o(n^{-3/2}),$ then
$E^{(a)}(r) \in \Theta\left(\frac{1}{n^{\frac{a}{2}-1}}\right).$
\end{theorem}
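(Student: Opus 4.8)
The plan is to establish the two matching bounds $E^{(a)}(r)=O\left(\frac{1}{n^{a/2-1}}\right)$ and $E^{(a)}(r)=\Omega\left(\frac{1}{n^{a/2-1}}\right)$. The upper bound is essentially free: enlarging the sensing range only enlarges the set of admissible final placements, so $E^{(a)}(r)$ is non-increasing in $r$, and because $r=\frac1{2n}+\frac{f(n)}2\ge\frac1{2n}$ it is at most the cost of the particular strategy that moves the $i$th order statistic $X_i$ to the anchor $t_i=\frac in-\frac1{2n}$. The placement $t_1,\dots,t_n$ already covers $[0,1]$ for range $\frac1{2n}$, hence also for our $r$, so $E^{(a)}(r)\le\sum_{i=1}^n D^{(a)}_i=O\left(\frac{1}{n^{a/2-1}}\right)$ by Theorem~\ref{thm:mainexactodd} ($a$ even) and Theorem~\ref{thm:mainexact_oddtwo} ($a$ odd). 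The real work is the lower bound; the case $a=1$ is contained in \cite{spa_2013}, so I would treat $a\ge 2$.

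The first ingredient is a rigidity lemma. Since $2r=\frac1n+f(n)$ only barely exceeds the exact tiling length $\frac1n$, a covering configuration has very little freedom. Set $\epsilon_n:=nf(n)$, which by hypothesis is $o(n^{-1/2})$. Given any final placement covering $[0,1]$, extract a minimal subfamily of its $n$ sensing intervals that still covers $[0,1]$, with sorted centres $y_1\le\cdots\le y_m$. By minimality each such interval meets $[0,1]$, and then coverage forces $y_1\le r$, $y_m\ge 1-r$, and $y_{l+1}-y_l\le 2r$ for $1\le l<m$ (a larger gap would leave an uncovered point of $[0,1]$). Telescoping from the left gives $y_l\le (2l-1)r=t_l+\frac{(2l-1)f(n)}2\le t_l+\epsilon_n$, and telescoping from the right gives $y_l\ge 1-(2(m-l)+1)r\ge t_l-\epsilon_n$; hence $y_l\in[t_l-\epsilon_n,\,t_l+\epsilon_n]$ for every $l$. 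Since every sensing interval has length $\frac1n+f(n)$, a cover needs at least $\big\lceil\frac{n}{1+nf(n)}\big\rceil\ge n-n^2f(n)$ intervals, so at most $k_0:=\lfloor n^2f(n)\rfloor=o(n^{1/2})$ sensors are redundant in any configuration.

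The second ingredient converts this into a cost bound. Fix any admissible configuration and any assignment of sensors to their final positions, and let $S=\{i_1<\cdots<i_m\}$ be the indices of the $m$ sensors sent to the subcover centres $y_1,\dots,y_m$. Discarding the contributions of the other sensors, the cost is at least $\sum_{i\in S}|X_i-(\text{its target})|^a$, and for a fixed multiset of targets this is minimized by the monotone (sorted-to-sorted) matching because $t\mapsto|t|^a$ is convex for $a\ge1$; thus the cost is at least $\sum_{l=1}^m\big(|X_{i_l}-t_l|-\epsilon_n\big)_+^a$. As $S$ omits at most $k_0$ indices, $|i_l-l|\le k_0$ and $|t_{i_l}-t_l|\le k_0/n\le\epsilon_n$, so with $\delta_n:=2\epsilon_n=o(n^{-1/2})$ we obtain, uniformly over all configurations and assignments,
$$E^{(a)}(r)\ \ge\ \sum_{i=1}^{n}\E{\big(|X_i-t_i|-\delta_n\big)_+^a}\ -\ \E{M_{k_0}},$$
where $M_{k_0}$ denotes the sum of the $k_0$ largest among the values $\big(|X_i-t_i|-\delta_n\big)_+^a\in[0,1]$. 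For the first sum I would use the elementary estimate $(u-\delta)_+^a\ge u^a-a\,\delta\,(u+\delta)^{a-1}$ together with the asymptotics of $\sum_i\E{|X_i-t_i|^b}$ for $b=a$ and $b=a-1$ from Theorems~\ref{thm:mainexactodd} and~\ref{thm:mainexact_oddtwo}: since $\delta_n=o(n^{-1/2})$ the correction $a\delta_n\sum_i\E{(|X_i-t_i|+\delta_n)^{a-1}}$ is $o\left(\frac{1}{n^{a/2-1}}\right)$, leaving $\sum_i\E{(|X_i-t_i|-\delta_n)_+^a}=\Theta\left(\frac{1}{n^{a/2-1}}\right)$ — this is exactly where the hypothesis $f(n)=o(n^{-3/2})$, i.e.\ $\epsilon_n=o(n^{-1/2})$, is used. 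For $\E{M_{k_0}}$ I would write $M_{k_0}\le\sum_i v_i\mathbf{1}[v_i\ge\tau]+k_0\tau$ with $v_i=(|X_i-t_i|-\delta_n)_+^a$ and a threshold $\tau$ of order $(\log n/n)^{a/2}$; a Chernoff bound for the uniform order statistics makes $\sum_i\Pr[v_i\ge\tau]$ super-polynomially small, while $k_0\tau=o\left(\frac{1}{n^{a/2-1}}\right)$, so $\E{M_{k_0}}=o\left(\frac{1}{n^{a/2-1}}\right)$. Combining the two estimates yields $E^{(a)}(r)=\Omega\left(\frac{1}{n^{a/2-1}}\right)$, completing the proof.

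The step I expect to cause the most trouble is the handling of redundant sensors. The naive claim ``every sorted final position lies within $\epsilon_n$ of the corresponding anchor'' is false: a configuration may leave up to $\Theta(n^2 f(n))$ sensors parked at essentially arbitrary points of $[0,1]$ without harming coverage, and in the sorted list of all $n$ final positions these perturb the correspondence with the $t_i$. The rescue is quantitative: there are only $o(n^{1/2})$ such sensors, and the relevant Beta order statistics concentrate on scale $n^{-1/2}$, so the penalty $\E{M_{k_0}}$ they generate is negligible compared with $\frac{1}{n^{a/2-1}}$. Matching these two scales is the delicate part, and it is precisely why the threshold is stated with $f(n)=o(n^{-3/2})$.
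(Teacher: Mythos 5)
Your proposal is correct in outline and its core is the same as the paper's: bound $E^{(a)}(r)$ above by the cost of moving to the anchors $t_i$ (Theorems~\ref{thm:mainexactodd} and \ref{thm:mainexact_oddtwo}), and below by comparing an arbitrary feasible final configuration to the anchors, using the hypothesis $f(n)=o(n^{-3/2})$ to show that every admissible target is within $nf(n)=o(n^{-1/2})$ of the corresponding $t_i$, so the perturbation of $\sum_i\E{|X_i-t_i|^a}$ is $o\left(n^{1-a/2}\right)$. Where you diverge is the machinery for the lower bound: the minimal-subcover extraction, the count $k_0=\lfloor n^2f(n)\rfloor$ of redundant sensors, and the truncated-maxima term $\E{M_{k_0}}$ are all avoidable, because the ``naive claim'' you dismiss as false is in fact true. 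If $b_1\le b_2\le\dots\le b_n$ is the sorted list of \emph{all} $n$ final positions (clamped to $[0,1]$, which only decreases cost) and their sensing intervals cover $[0,1]$, then necessarily $b_1\le r$, $b_n\ge 1-r$ and $b_{i+1}-b_i\le 2r$ for every $i$: a larger gap would leave the point just to the right of $b_i+r$ uncovered, since sensors with index $\le i$ reach only up to $b_i+r$ and those with index $\ge i+1$ reach only down to $b_{i+1}-r$. Redundant sensors merely subdivide gaps further and cannot violate these constraints. Telescoping then gives $|b_i-t_i|\le nf(n)$ for \emph{every} $i$, redundant or not, and the paper concludes directly from
$|X_i-b_i|^a\ge 2^{1-a}|X_i-t_i|^a-|b_i-t_i|^a$ together with $\sum_i|b_i-t_i|^a\le n^{a+1}f(n)^a=o\left(n^{1-a/2}\right)$. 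Your route still works (the monotone-matching step, the inequality $(u-\delta)_+^a\ge u^a-a\delta(u+\delta)^{a-1}$, and the concentration bound for $\E{M_{k_0}}$ all check out for $a\ge 2$), and it has the mild virtue of not assuming anything about which sensors form the cover; but the extra bookkeeping buys nothing here, and the step you flag as ``the delicate part'' is precisely the one the coverage constraint renders unnecessary.
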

\begin{proof} 
Let $a$ be a natural number.
Assume that $r=\frac{1}{2n}+f(n),$ where $f(n)>0$ and $f(n)=o(n^{-3/2}).$
Throughout the proof we use the fact that
$E^{(a)}\left(\frac{1}{2n}\right) \in \Theta\left(\frac{1}{n^{\frac{a}{2}-1}}\right).$

First we prove the upper bound
$E^{(a)}(r) \in O \left(\frac{1}{n^{\frac{a}{2}-1}}\right)$. This is easy because we can
displace the sensors to the anchor locations $t_i = \frac{i}{n}-\frac{1}{2n}$,
for $i=1,2,\ldots ,n$ at a total displacement cost of
$O\left(\frac{1}{n^{\frac{a}{2}-1}}\right).$ This suffices if $r \geq \frac{1}{2n}$
since in this case the contiguous coverage is assured.

Next we prove the lower bound
$E^{(a)}(r) \in \Omega \left(\frac{1}{n^{\frac{a}{2}-1}}\right)$.
We would like to know how much we can reduce
the sum of displacements if we change the radius
from $\frac{1}{2n}$
to $\frac{1}{2n}+\frac{f(n)}{2},$ where $f(n)>0$ and $f(n)=o(n^{-3/2}).$
Let $b_i$ be the sequence such that $0\le b_1\le b_2\le \dots b_n\le 1,\,\,$
$b_1\le r,\,\,$ $1-b_n\le r$ and $b_{i+1}-b_i\le r,$ for $i=1,\dots , n-1.$
Let $X_i$ be the position of the ith sensor in the interval $[0,1].$
It is sufficient to show that
$$
\sum_{i=1}^{n}\E{|X_i-b_i|^a}\in\Omega\left(\frac{1}{n^{\frac{a}{2}-1}}\right).
$$
Let us recall that $t_i=\frac{i}{n}-\frac{1}{2n},$ for $i=1,2,\dots ,n.$ Using the inequality
(for $a\in \mathbb{N}^{+}$)
$$
|X_i-t_i|^a\le 2^{a-1}\left(|X_i-b_i|^a+|b_i-t_i|^a\right)
$$
we get
\begin{equation}
\label{eq:mink}
\sum_{i=1}^{n}\E{|X_i-b_i|^a}\ge2^{-a+1}\sum_{i=1}^{n}\E{|X_i-t_i|^a}-\sum_{i=1}^{n}|b_i-t_i|^a
\end{equation}
By Theorem \ref{thm:mainexactodd}, Theorem \ref{thm:mainexact_oddtwo}
we know 
\begin{equation}
\label{eq:exest}
\sum_{i=1}^{n}\E{|X_i-t_i|^a}\in \Theta \left(\frac{1}{n^{\frac{a}{2}-1}}\right)
\end{equation}
Assume that $b_i=\min\left((i-1)\left(\frac{1}{n}+f(n)\right)+\frac{1}{2n}+\frac{f(n)}{2}, 1\right),$
for $i=1,2,\dots,n.$ Let $m+1$ be the smallest positive $i$, such that 
$$(i-1)\left(\frac{1}{n}+f(n)\right)+\frac{1}{2n}+\frac{f(n)}{2}>1.$$
Clearly, if the $i$th sensor occupies position $b_i,$ for $i=1,2,\dots,m,$ then the distance between consecutive sensors is equal to $2r.$
Observe that $b_i-t_i\le b_{i+1}-a_{i+1},$ for $i=1,2,\dots,m.$ and
$$\max_{i=1,2,\dots,n}|b_i-t_i|^a\le n^a (f(n))^a.$$ Hence,
$$\sum_{i=1}^{n}|b_i-t_i|^a\le n^{a+1} (f(n))^a.$$
Therefore, we conclude that for all sequences $b_i,$ such that $0\le b_1\le b_2\le \dots b_n\le 1,\,\,$
$b_1\le r,\,\,$ $1-b_n\le r$ and $b_{i+1}-b_i\le r,$ for $i=1,\dots , n-1,$
\begin{equation}
\label{eq:asyn}
\sum_{i=1}^{n}|b_i-t_i|^a\le n^{a+1} (f(n))^a=o\left(\frac{1}{n^{\frac{a}{2}-1}}\right)
\end{equation}
Putting together (\ref{eq:mink}), (\ref{eq:exest}) and (\ref{eq:asyn}) we get
$$
\sum_{i=1}^{n}\E{|X_i-b_i|^a}\ge2^{-a+1}\Theta \left(\frac{1}{n^{\frac{a}{2}-1}}\right)-o\left(\frac{1}{n^{\frac{a}{2}-1}}\right)=\Theta\left(\frac{1}{n^{\frac{a}{2}-1}}\right).
$$
This is sufficient to complete the proof of Theorem~\ref{cov1:thm}.
\end{proof}

\section{Upper bounds for total displacement when $r>\frac{1}{2n}$}
\label{upper bounds:sec}
Now we study a more general version of the sensor movement to power $a,$ where $a>0.$
Suppose that $n$ sensors with radius $r=\frac{f}{2n}$ are thrown randomly and independently with the uniform distribution in the unit interval.
The question is how to estimate the  total expected movement  to the power $a$ for $f>1$? 
If $f>6$ we present Algorithm \ref{alg_power} that uses expected 
$O\left(\frac{1}{n^{\frac{a}{2}-1}}\left(\frac{\ln n}{n}\right)^{\frac{a}{2}}\right),$
total movement to power $a,$ 
where $a>0.$ The correctness of the algorithm is derived from Theorem \ref{thm:alg}.

We begin with a theorem which indicates how to apply the results of 
Theorem \ref{thm:mainexactodd}  and Theorem \ref{thm:mainexact_oddtwo}
to displacements to the fractional power $a$.
\begin{theorem}
 \label{thm:fractional}
Let $a>0$ . Assume that $n$ mobile sensors are thrown uniformly and independently at random in the unit interval. The expected sum
of displacements to the power $a$ of all sensors to move from their current location to anchor location $t_i=\frac{i}{n}-\frac{1}{2n},$
for $i=1,\dots,n,$ respectively is $O\left(\frac{1}{n^{\frac{a}{2}-1}}\right).$
\end{theorem}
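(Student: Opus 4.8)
The plan is to reduce the case of an arbitrary positive real exponent $a>0$ to the already-established integer cases by a Hölder/Jensen-type monotonicity argument together with the bound of Theorem~\ref{thm:mainexactodd} at the next even integer. First I would fix notation exactly as in the proof of Theorem~\ref{thm:mainexact\_oddtwo}: let $X_i$ be the $i$th order statistic, which has the $\EBeta{i}{n-i+1}$ distribution, and let
$$
D^{(a)}_i = i\binom{n}{i}\int_0^1 |t_i-x|^a g_{i:n-i+1}(x)\,dx
$$
be the expected $a$th power of the displacement of the $i$th sensor to its anchor $t_i=\frac{i}{n}-\frac{1}{2n}$. The quantity to be bounded is $\sum_{i=1}^n D^{(a)}_i$.

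The key step is to choose the smallest even integer $A$ with $A\ge a$ (so $A\in\{2,4,\dots\}$ and $A<a+2$) and to compare the $a$th moment with the $A$th moment. Applying Hölder's inequality for the integral against the probability density $g_{i:n-i+1}(x)\,i\binom{n}{i}$ with exponents $\tfrac{A}{a}$ and $\tfrac{A}{A-a}$ (exactly as in~(\ref{eq:holder10}) and~(\ref{eq:holdera})), one gets $\left(D^{(a)}_i\right)^{A/a}\le D^{(A)}_i$, i.e.\ $D^{(a)}_i\le \left(D^{(A)}_i\right)^{a/A}$. Then I would apply the discrete Hölder inequality to $\sum_{i=1}^n D^{(a)}_i \le \sum_{i=1}^n \left(D^{(A)}_i\right)^{a/A}$ with exponents $\tfrac{A}{a}$ and $\tfrac{A}{A-a}$, exactly mirroring~(\ref{eq:holders}) and~(\ref{eq:holder3}), to obtain
$$
\sum_{i=1}^n D^{(a)}_i \;\le\; \left(\sum_{i=1}^n D^{(A)}_i\right)^{a/A}\, n^{1-a/A}.
$$
Since $A$ is an even positive integer, Theorem~\ref{thm:mainexactodd} gives $\sum_{i=1}^n D^{(A)}_i\in O\!\left(n^{1-A/2}\right)$, and substituting this in yields
$$
\sum_{i=1}^n D^{(a)}_i \;\le\; O\!\left(n^{a/A\,(1-A/2)}\right) n^{1-a/A} \;=\; O\!\left(n^{1-a/2}\right)
\;=\; O\!\left(\frac{1}{n^{a/2-1}}\right),
$$
which is the claimed bound. (When $a$ is itself an even integer one may simply take $A=a$ and invoke Theorem~\ref{thm:mainexactodd} directly; the argument above subsumes this.)

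The only mild subtlety — and the step I would be most careful about — is the degenerate endpoint $a=A$, where the second Hölder exponent $\tfrac{A}{A-a}$ is infinite; there the chain collapses to $\sum D^{(a)}_i=\sum D^{(A)}_i$ and the conclusion is immediate, so it is handled by treating it as a separate trivial case. Otherwise everything is a verbatim reuse of the two Hölder manipulations already carried out in the proof of Theorem~\ref{thm:mainexact\_oddtwo}, so no genuinely new estimate is required; the content is simply that $O\!\left(n^{1-a/2}\right)$ is the right interpolated order between consecutive even integers.
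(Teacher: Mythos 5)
Your proposal is correct and follows essentially the same route as the paper: the paper's proof applies exactly the same pair of H\"older inequalities (discrete and integral) to reduce the fractional exponent $a$ to the integer exponent $\lceil a\rceil$, then invokes Theorems~\ref{thm:mainexactodd} and~\ref{thm:mainexact_oddtwo}, and the exponent arithmetic matches yours. The only difference is your choice of the next \emph{even} integer $A$ in place of $\lceil a\rceil$, which is a harmless variant that lets you cite only Theorem~\ref{thm:mainexactodd} and sidestep the odd case, provided you keep the separate treatment of the degenerate endpoint $a=A$ as you note.
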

\begin{proof}
By Theorem \ref{thm:mainexactodd} and Theorem \ref{thm:mainexact_oddtwo} we may assume that $a>0$ and $a\notin \mathbb{N}^{+}.$
Let $D^{(a)}_i$ be the expected distance to the power $a$ between $X_i$ and the $i^{th}$ anchor location, $t_i=\frac{i}{n}-\frac{1}{2n},$ on the
unit interval, hence given by:
$$
D^{(a)}_i=i\binom{n}{i}\int_{0}^{1}|t_i-x|^ag_{i:n-i+1}(x)dx.
$$
Then we use discrete H\"older inequality with parameters $\frac{\lceil a\rceil}{a},$  $\frac{\lceil a\rceil}{\lceil a\rceil-a}$
and get
\begin{align}
 \notag
\sum_{i=1}^{n}D^{(a)}_i
&\le
\left(\sum_{i=1}^{n}\left(D^{(a)}_i\right)^{\frac{\lceil a\rceil}{a}}\right)^{\frac{a}{\lceil a \rceil}}
\left(\sum_{i=1}^{n}1\right)^{\frac{\lceil a\rceil-a}{\lceil a\rceil}}\\
&=  \label{eq:holder}
\left(\sum_{i=1}^{n}\left(D^{(a)}_i\right)^{\frac{\lceil a\rceil}{a}}\right)^{\frac{a}{\lceil a \rceil}}
n^{\frac{\lceil a\rceil-a}{\lceil a\rceil}}
\end{align}
Next we use H\"older inequality for integrals with parameters $\frac{\lceil a\rceil}{a},$  $\frac{\lceil a\rceil}{\lceil a\rceil-a}$ and get
\begin{align*}
\int_0^1&|t_i-x|^ag_{i:n-i+1}(x)i\binom{n}{i}dx\\
&\le
\left(\int_0^1\left(|t_i-x|^a\right)^{\frac{\lceil a \rceil}{a}}g_{i:n-i+1}(x)i\binom{n}{i}dx\right)^{\frac{a}{\lceil a\rceil}},
\end{align*}
so 
\begin{equation}
 \label{eq:holder15}
\left(D^{(a)}_i\right)^{\frac{\lceil a\rceil}{a}}\le D^{(\lceil a\rceil)}_i
\end{equation}
Putting together Theorem \ref{thm:mainexactodd}, Theorem \ref{thm:mainexact_oddtwo}
and equations (\ref{eq:holder}), (\ref{eq:holder15}) we deduce that
$$
\sum_{i=1}^{n}D^{(a)}_i\le\left(\Theta\left(\frac{1}{n^{\frac{\lceil a \rceil}{2}-1}}\right)\right)^{\frac{a}{\lceil a \rceil}}
n^{\frac{\lceil a\rceil-a}{\lceil a\rceil}}=\Theta\left(\frac{1}{n^{\frac{a }{2}-1}}\right).
$$
This finishes the proof of Theorem \ref{thm:fractional}. 
\end{proof}
Now we give a lemma which indicates how to scale the results of Theorem \ref{thm:fractional}
to intervals of arbitrary length.
\begin{Lemma}
\label{lem:scale}
Let $a>0.$ Assume that $m$ mobile sensors are thrown uniformly and independently at random in the interval of length $x.$ The sensors are to be moved to
equidistant positions (within the interval) at distance $x/m$ from each other. Then the total expected movement to the power $a$ of the sensors
is $O\left(\frac{x^{a}}{m^{\frac{a}{2}-1}}\right).$
\end{Lemma}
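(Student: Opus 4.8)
The plan is to reduce Lemma~\ref{lem:scale} to Theorem~\ref{thm:fractional} by a simple affine change of variables that rescales the interval $[0,x]$ to the unit interval $[0,1]$. First I would set up the rescaling: if $Y_1 \le Y_2 \le \dots \le Y_m$ denote the order statistics of $m$ points thrown uniformly and independently at random in $[0,x]$, then $Y_i = x \cdot X_i$ where $X_1 \le \dots \le X_m$ are the order statistics of $m$ uniform points in $[0,1]$. The target equidistant positions in $[0,x]$ at spacing $x/m$ are $s_i = \frac{i}{m}x - \frac{x}{2m} = x\left(\frac{i}{m} - \frac{1}{2m}\right) = x\, t_i$, where $t_i$ is exactly the anchor location from Theorem~\ref{thm:fractional} (with $n$ replaced by $m$).

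The key computation is then the identity
\begin{equation*}
\sum_{i=1}^{m} \E{|Y_i - s_i|^a} = \sum_{i=1}^{m} \E{|x X_i - x t_i|^a} = x^a \sum_{i=1}^{m} \E{|X_i - t_i|^a}.
\end{equation*}
By Theorem~\ref{thm:fractional} applied with parameter $m$ in place of $n$, the sum $\sum_{i=1}^{m} \E{|X_i - t_i|^a}$ is $O\!\left(\frac{1}{m^{a/2-1}}\right)$, and multiplying through by $x^a$ yields the claimed bound $O\!\left(\frac{x^a}{m^{a/2-1}}\right)$.

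The only point requiring a word of care is that ``moving sensors to equidistant positions'' should be matched up in sorted order: the minimum (or at least an $O(\cdot)$-good) assignment sends the $i$th smallest sensor to the $i$th target position, which is precisely the pairing of order statistics $Y_i$ with anchors $s_i$ used above; any displacement scheme at least as cheap as this one inherits the same upper bound, so the statement ``the total expected movement is $O(\cdot)$'' holds regardless of which such scheme is meant. I do not anticipate a genuine obstacle here — the lemma is a scaling corollary of Theorem~\ref{thm:fractional}, and the main thing to get right is simply recording the correspondence $s_i = x\,t_i$ and pulling the factor $x^a$ out of the expectation.
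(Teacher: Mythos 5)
Your proposal is correct and is essentially the paper's own argument: the paper likewise rescales the interval by $1/x$, invokes Theorem~\ref{thm:fractional}, and scales back, which is exactly your factor of $x^a$. Your version just spells out the correspondence $s_i = x\,t_i$ between order statistics and anchors more explicitly than the paper does.
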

\begin{proof}
Assume that $m$ sensors are in the interval $[0,x].$ Then multiply their coordinates by $1/x.$ From Theorem \ref{thm:fractional} 
the total movement to the power $a$ in the unit interval is in $O\left(\frac{1}{m^{\frac{a}{2}-1}}\right).$ Now by multiplying their coordinates by $x$
we get the desired result. 
\end{proof}
Our upper bound on the total sensor movement to power $a$ is based on the Algorithm \ref{alg_power}.
\begin{algorithm}
\caption{Displacement to the power $a$ when $a>0,$ \,\,\,\, $p=\frac{9}{4}(2+a),\,\,$ $q=\frac{3}{4}(2+a),$
$x_0$ is the real solution of the equation $\frac{x}{\frac{9}{4}(2+a)\ln x}=3$ such that $x_0\ge 3$
}
\label{alg_power}
\begin{algorithmic}[1]
 \REQUIRE $n\ge \lceil x_0 \rceil$ mobile sensors with identical sensing radius $r=\frac{f}{2n},\,\,$ $f>6$ placed uniformly and independently at random on the interval $[0,1]$
 \ENSURE  The final positions of sensors to attain coverage of the interval $[0,1]$
  \STATE Divide the interval into subintervals of length $\frac{1}{\left\lfloor\frac{n}{p\ln n}\right\rfloor}$;
   \IF{there is a subinterval with fewer than $\frac{1}{3}\frac{n}{\left\lfloor\frac{n}{p\ln n}\right\rfloor}$ sensors} 
   \STATE{moves all $n$ sensors to positions that are equidistant;} 
   \ELSE 
   \STATE{ in each subinterval choose $\lfloor q\ln n\rfloor$ sensors at random and move the chosen sensors to equidistant position
  so as to cover the subinterval;}
   \ENDIF
\end{algorithmic}
\end{algorithm}
\begin{theorem}
\label{thm:alg}
Let $a>0,$ $f>6$ and $n\ge \lceil x_0 \rceil,$ where $x_0$
is the solution of the equation $\frac{x}{\frac{9}{4}(2+a)\ln x}=3$ such that $x_0\ge 3.$
Assume that $n$ sensors of radius $r=\frac{f}{2n}$ are thrown randomly and independently
with uniform distribution on a unit interval. 
Then the total expected movement to power $a$ of sensors required to cover the interval is in
$O\left(\frac{1}{n^{\frac{a}{2}-1}}(\frac{\ln n}{n})^{\frac{a}{2}}\right).$
\end{theorem}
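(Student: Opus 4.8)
The plan is to analyze the two branches of Algorithm~\ref{alg_power} separately and show that each contributes the claimed $O\!\left(\frac{1}{n^{a/2-1}}\left(\frac{\ln n}{n}\right)^{a/2}\right)$ in expectation, then combine via the probability that each branch is taken. Write $N=\left\lfloor\frac{n}{p\ln n}\right\rfloor$ for the number of subintervals, so each subinterval has length $1/N=\Theta\!\left(\frac{\ln n}{n}\right)$ and expected occupancy $n/N=\Theta(\ln n)$; the ``good'' branch chooses $\lfloor q\ln n\rfloor$ sensors per subinterval. First I would verify that the good branch actually attains coverage: in each subinterval of length $1/N$ we place $\lfloor q\ln n\rfloor$ equidistant sensors, so consecutive chosen sensors are at distance at most $\frac{1}{N\lfloor q\ln n\rfloor}$, and one needs this (together with the boundary gaps between adjacent subintervals) to be $\le 2r=\frac{f}{n}$; since $N\lfloor q\ln n\rfloor = \Theta\!\left(\frac{n}{p\ln n}\cdot q\ln n\right)=\Theta(n)$ with the constant $q/p=1/3$, the spacing is $\approx \frac{3}{n}<\frac{f}{n}$ because $f>6$, so coverage holds (the choice $p=\frac94(2+a)$, $q=\frac34(2+a)$ and the slack from $f>6$ is exactly what makes this work, including the $\lfloor\cdot\rfloor$ roundings).

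For the cost of the good branch, I would apply Lemma~\ref{lem:scale} in each of the $N$ subintervals with $m=\lfloor q\ln n\rfloor$ sensors and interval length $x=1/N$: the expected movement to power $a$ per subinterval is $O\!\left(\frac{(1/N)^a}{(\ln n)^{a/2-1}}\right)$. Summing over the $N$ subintervals gives
\begin{align*}
N\cdot O\!\left(\frac{N^{-a}}{(\ln n)^{a/2-1}}\right)
&= O\!\left(\frac{N^{1-a}}{(\ln n)^{a/2-1}}\right)
= O\!\left(\frac{1}{(\ln n)^{a/2-1}}\left(\frac{\ln n}{n}\right)^{a-1}\right),
\end{align*}
using $N=\Theta(n/\ln n)$; rewriting, this equals $O\!\left(\frac{1}{n^{a/2-1}}\left(\frac{\ln n}{n}\right)^{a/2}\right)$ after checking the exponents of $n$ and $\ln n$ match (the exponent of $n$ is $1-a=-(a/2-1)-a/2$ and the exponent of $\ln n$ is $a-1-(a/2-1)=a/2$). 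There is a subtlety here: Lemma~\ref{lem:scale} gives the expected cost conditioned on $m$ sensors being uniform in the subinterval, but the $\lfloor q\ln n\rfloor$ chosen sensors, conditioned on the subinterval containing enough sensors, are indeed uniformly and independently distributed within it, so the lemma applies directly to this conditional expectation.

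The remaining work is to show the ``bad'' branch is negligible. The bad branch fires only when some subinterval has fewer than $\frac13\cdot\frac{n}{N}=\frac13\,p\ln n$ sensors; since the expected count is $\approx p\ln n$, a Chernoff bound gives that a fixed subinterval is deficient with probability at most $e^{-c\,p\ln n}=n^{-cp}$ for some absolute constant $c$ (e.g.\ the lower-tail bound at deviation $2/3$ of the mean gives $c\ge 2/9$), and a union bound over $N\le n$ subintervals makes the bad event have probability $O\!\left(n^{1-cp}\right)$. The choice $p=\frac94(2+a)$ makes $cp\ge \frac{2}{9}\cdot\frac94(2+a)=\frac{2+a}{2}$, so the bad-event probability is $O\!\left(n^{1-(2+a)/2}\right)=O\!\left(n^{-a/2}\right)$. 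In the bad branch the worst-case cost is crude: all $n$ sensors move within $[0,1]$ a distance at most $1$, so the cost is at most $n$. Hence the bad branch contributes at most $O\!\left(n\cdot n^{-a/2}\right)=O\!\left(n^{1-a/2}\right)$ to the expectation, and one checks $n^{1-a/2}=O\!\left(\frac{1}{n^{a/2-1}}\left(\frac{\ln n}{n}\right)^{a/2}\right)$ fails in general --- so the constant $p$ must in fact be taken large enough to absorb the extra $(\ln n/n)^{a/2}=(\ln n)^{a/2}n^{-a/2}$ factor; tracking this carefully, one needs $cp>\frac{2+a}{2}$ with enough room, which is exactly why $p=\frac94(2+a)$ rather than $\frac12(2+a)$. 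Adding the two branch contributions completes the bound. The main obstacle is bookkeeping the constants and floor functions so that coverage is genuinely guaranteed in the good branch while the bad-branch tail probability is small enough (with the logarithmic slack) to be dominated; the probabilistic and scaling ingredients themselves are routine given Lemma~\ref{lem:scale} and a Chernoff bound.
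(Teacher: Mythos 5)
Your analysis of the ``good'' branch is essentially the paper's: same subdivision into $N=\lfloor n/(p\ln n)\rfloor$ pieces, same coverage check via the $q/p=1/3$ slack against $f>6$, same application of Lemma~\ref{lem:scale} with $x=1/N$ and $m=\lfloor q\ln n\rfloor$, and your exponent bookkeeping $N^{1-a}(\ln n)^{1-a/2}=n^{1-a}(\ln n)^{a/2}$ is correct. The Chernoff/union-bound estimate $\Pr[\text{bad}]=O(n^{-a/2})$ (the paper's Claim~\ref{claim:first} gives $<\lfloor n/(p\ln n)\rfloor/n^{1+a/2}$, i.e.\ roughly $n^{-a/2}/(p\ln n)$) also matches.

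The gap is in the bad branch, and you half-notice it yourself. You bound the bad-branch cost by the worst case $n$ (each of $n$ sensors moves distance at most $1$), get a contribution $n\cdot n^{-a/2}=n^{1-a/2}$, observe correctly that this is \emph{not} $O\bigl(n^{1-a}(\ln n)^{a/2}\bigr)$, and then propose to repair this by ``taking $p$ large enough.'' That repair is not available: $p=\frac{9}{4}(2+a)$ is fixed by the algorithm and the theorem statement, and with the threshold at one third of the mean the Chernoff exponent caps the tail at roughly $n^{-a/2}$, so no admissible choice of constants rescues the crude cost bound $n$ --- you would need $\Pr[\text{bad}]\lesssim n^{-a}$ for that. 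The missing idea is that the bad branch of Algorithm~\ref{alg_power} does not do something arbitrary: it moves all $n$ sensors to the equidistant anchor positions $t_i$, and by Theorem~\ref{thm:fractional} the expected cost of that operation is $O\bigl(1/n^{\frac{a}{2}-1}\bigr)=O\bigl(n^{1-a/2}\bigr)$, a factor of order $n^{a/2}$ better than your worst-case $n$. This is exactly what the paper multiplies against the bad-event probability: $\frac{\lfloor n/(p\ln n)\rfloor}{n^{1+a/2}}\cdot O\bigl(n^{1-a/2}\bigr)=O\bigl(\frac{1}{n^{a-1}\ln n}\bigr)$, which is dominated by the target $\frac{(\ln n)^{a/2}}{n^{a-1}}$. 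Replace your worst-case bound with an appeal to Theorem~\ref{thm:fractional} for the bad branch and the argument closes; as written, it does not.
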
 
\begin{proof} Assume that $a>0.$
Let $p=\frac{9}{4}(2+a)$ and  $q=\frac{3}{4}(2+a),$ 
$x_0$ is the solution of the equation $\frac{x}{\frac{9}{4}(2+a)\ln x}=3$ such that $x_0\ge 3.$
First of all, observe that $\frac{n}{p\ln (n)}> 3$ for $n\ge \lceil x_0\rceil .$
We will prove that the total expected movement to power $a$ of Algorithm \ref{alg_power}
is in $O\left(\frac{1}{n^{\frac{a}{2}-1}}(\frac{\ln n}{n})^{\frac{a}{2}}\right).$

There are two cases to consider.

Case 1: There exists a subinterval with fewer than 
$\frac{1}{3}\frac{n}{\left\lfloor\frac{n}{p\ln n}\right\rfloor}$
sensors. In this case the total expected movement to power $a$ is 
$O\left(\frac{1}{n^{\frac{a}{2}-1}}\right)$ by Theorem \ref{thm:fractional}.

Case 2:  All subintervals contain at least $\frac{1}{3}\frac{n}{\left\lfloor\frac{n}{p\ln n}\right\rfloor}$ sensors. From the inequality 
$\lfloor x\rfloor \le x$
we deduce that, 
$\left\lfloor q\ln n\right\rfloor\le\frac{1}{3}\frac{n}{\left\lfloor\frac{n}{p\ln n}\right\rfloor}.$
Hence it is possible to choose $\left\lfloor q\ln n \right\rfloor$ sensors at random in each subinterval
with more than 
$\frac{1}{3}\frac{n}{\left\lfloor\frac{n}{p\ln n}\right\rfloor}$
sensors. 
Let us consider the sequence
$$a_n=\left\lfloor q\ln n\right\rfloor \frac{6}{n} \left\lfloor\frac{n}{p\ln n}\right\rfloor
\,\,\,\,\,
\text{for}\,\,\,\,\, n\ge \lceil x_0\rceil.$$
Applying inequality $\lfloor x\rfloor>x-1$ we see that
\begin{equation}
\label{eq:an1}
a_n>2\left(1-\frac{1}{q\ln n}\right)  \left(1-\frac{p\ln n}{n}\right)
\end{equation}
Observe that
\begin{equation}
\label{eq:an2}
\frac{p\ln n}{n} \le \frac{1}{3},\,\, \frac{1}{q\ln n}\le \frac{1}{4}\,\,\,\,\,\text{for}\,\,\,\,\, n\ge \lceil x_0\rceil
\end{equation}

Putting together Equation (\ref{eq:an1}) and Equation (\ref{eq:an2}) we get
$$\left\lfloor q\ln n \right\rfloor \frac{f}{n} \left\lfloor\frac{n}{p\ln n}\right\rfloor\ge a_n> 1.$$
Therefore, $\left\lfloor q\ln n \right\rfloor$ 
chosen sensors are enough to attain the coverage.
By the independence of the sensors positions, the $\left\lfloor q\ln n \right\rfloor$ 
chosen sensors in any given subinterval are distributed randomly and independently with uniform distribution over the subinterval of length
$\frac{1}{\left\lfloor \frac{n}{p\ln n}\right\rfloor}.$
By Lemma \ref{lem:scale} the total expected movement to power $a$ inside each subinterval is 
$$O\left(\frac{1}{\left\lfloor \frac{n}{p\ln n}\right\rfloor^a}\frac{1}{\left\lfloor q\ln n\right\rfloor^{\frac{a}{2}-1}}\right)=
O\left(\frac{(\ln n)^{\frac{a}{2}}}{n^{a}}(\ln n)\right).$$
Since, there are $\left\lfloor\frac{n}{p\ln n}\right\rfloor$ subintervals, the total expected movement to power $a$ over all subintervals must be in
$O\left(\frac{1}{n^{\frac{a}{2}-1}}\left(\frac{\ln n}{n}\right)^{\frac{a}{2}}\right).$

It remains to consider the probability with which each of these cases occurs. The proof of the theorem will be a consequence of the following Claim.
\begin{Claim}
\label{claim:first}
Let $p=\frac{9}{4}(2+a).$
 The probability that fewer than 
 $\frac{1}{3}\frac{n}{\left\lfloor\frac{n}{p\ln n}\right\rfloor}$ 
 sensors fall in any subinterval is
 $<\frac{\left\lfloor\frac{n}{p\ln n}\right\rfloor}{n^{1+\frac{a}{2}}}.$
\end{Claim}
\begin{proof} (Claim~\ref{claim:first})
First of all, from the inequality $\lfloor x\rfloor \le x$ we get
$$\sqrt{\frac{(2+a)\ln n}{n}\left\lfloor\frac{n}{p\ln n}\right\rfloor}\le \frac{2}{3}.$$ 
Hence,
\begin{equation}
\label{eq:chernof}
\frac{1}{3}\frac{n}{\left\lfloor\frac{n}{p\ln n}\right\rfloor}\le
\frac{n}{\left\lfloor\frac{n}{p\ln n}\right\rfloor}-\sqrt{\frac{(2+a)n \ln n  }{\left\lfloor\frac{n}{p\ln n}\right\rfloor}}
\end{equation}
The number of sensors falling in a subinterval is a Bernoulli process with probability of success $\frac{1}{\left\lfloor\frac{n}{p\ln n}\right\rfloor}.$ By Chernoff bounds, the probability
that a given subinterval has fewer than 
$$
\frac{n}{\left\lfloor\frac{n}{p\ln n}\right\rfloor}-\sqrt{\frac{(2+a)n\ln n }{\left\lfloor\frac{n}{p\ln n}\right\rfloor}}
$$ 
sensors is less than 
$e^{-\left(1+\frac{a}{2}\right)\ln n}<\frac{1}{n^{1+\frac{a}{2}}}.$ 
Specifically we use the Chernoff bound 
$$\Pr[X<(1-\delta)m]<e^{-{\delta}^2m/2},$$
$m=\frac{n}{\left\lfloor\frac{n}{p\ln n}\right\rfloor},$
$\delta=\sqrt{\frac{(2+a)\ln n}{n}\left\lfloor\frac{n}{p\ln n}\right\rfloor}.$
As there are $\left\lfloor\frac{n}{p\ln n}\right\rfloor$ subintervals, the event that one has fewer than 
$$
\frac{n}{\left\lfloor\frac{n}{p\ln n}\right\rfloor}-\sqrt{\frac{(2+a)n\ln n }{\left\lfloor\frac{n}{p\ln n}\right\rfloor}}.
$$
sensors occurs with probability less than $\frac{\left\lfloor\frac{n}{p\ln n}\right\rfloor}{n^{1+\frac{a}{2}}}.$ This and Equation \eqref{eq:chernof} completes the proof of Claim \ref{claim:first}. 
\end{proof}
Using Claim  \ref{claim:first} we can upper bound the total expected movement to power $a$ as follows:
\begin{align*}
&\left(1-\frac{\left\lfloor\frac{n}{p\ln n}\right\rfloor}{n^{1+\frac{a}{2}}}\right)
O\left(\frac{1}{n^{\frac{a}{2}-1}}\left(\frac{\ln n}{n}\right)^{\frac{a}{2}}\right)
+\left(\frac{\left\lfloor\frac{n}{p\ln n}\right\rfloor}{n^{1+\frac{a}{2}}}\right)O\left(\frac{1}{n^{\frac{a}{2}-1}}\right)\\
&\,\,\,=
O\left(\frac{1}{n^{\frac{a}{2}-1}}\left(\frac{\ln n}{n}\right)^{\frac{a}{2}}\right),
\end{align*}
which proves Theorem \ref{thm:alg}. 
\end{proof}

\section{Conclusion}
\label{conclusion}
In this paper
we studied the expected minimum total (or sum) energy consumption in the movement of sensors with identical range when the energy consumed per sensor is proportional to some (fixed) power of the distance traveled. We obtained bounds on the expected minimum energy consumed 
depending on the range of the sensors. 
%
\bibliographystyle{plain}
\bibliography{refs}

\end{document}